\numberwithin{equation}{section}
\newtheorem{definition}{Definition}[section]
\newtheorem{theorem}{Theorem}[section]
\newtheorem{proposition}[theorem]{Proposition}
\newtheorem{corollary}[theorem]{Corollary}
\newtheorem{example}{Example}[section]
\definecolor{MyLightRed}{RGB}{244, 213, 245}
\definecolor{WordRed}{RGB}{255, 0, 102}
\definecolor{RedDarkLightest}{HTML}{ff0088}
\definecolor{RedDarkLight}{HTML}{ea005f}
\definecolor{RedPurple}{HTML}{aa007f}
\definecolor{Purple}{HTML}{911146}
\definecolor{WordLightGreen}{RGB}{140, 214, 192}
\definecolor{WordGreen}{RGB}{0, 176, 80}
\definecolor{GreenLightest}{HTML}{00ffa0}
\definecolor{GreenLighter1}{HTML}{00b383}
\definecolor{GreenLighter2}{HTML}{00aa7f}
\definecolor{GreenDark}{HTML}{225522}
\definecolor{GreenTeal}{HTML}{008080}
\definecolor{WordIceBlue}{RGB}{223, 227, 229}
\definecolor{MyVeryLightBlue}{RGB}{211, 245, 247}
\definecolor{WordBlueVeryLight}{RGB}{0, 176, 240}
\definecolor{WordBlueLight}{RGB}{0, 112, 192}
\definecolor{WordBlueDark}{RGB}{46, 116, 181}
\definecolor{WordBlueDarker}{RGB}{31, 78, 121}
\definecolor{WordBlueDarker25}{RGB}{54, 96, 146}
\definecolor{WordBlueDarker50}{RGB}{36, 64, 98}
\definecolor{WordBlueDarkest}{RGB}{0, 32, 96}
\definecolor{WordBlue}{RGB}{19, 65, 99}
\definecolor{MyBlue}{RGB}{0, 64, 128}
\definecolor{MyDarkBlue}{RGB}{0, 51, 102}
\definecolor{BlueVeryDark}{HTML}{222255}
\definecolor{WordAquaLighter80}{RGB}{218, 238, 243}
\definecolor{WordAquaLighter60}{RGB}{183, 222, 232}
\definecolor{WordAquaLighter40}{RGB}{146, 205, 220}
\definecolor{WordAquaDarker25}{RGB}{49, 134, 155}
\definecolor{WordAquaDarker50}{RGB}{33, 89, 103}
\definecolor{WordVeryLightTeal}{RGB}{223, 236, 235}
\definecolor{WordLightTeal}{RGB}{160, 199, 197}
\definecolor{WordDarkTealLighter80}{RGB}{207, 223, 234}
\definecolor{WordDarkTeal}{RGB}{72, 123, 119}
\definecolor{WordDarkerTeal}{RGB}{48, 82, 80}
\definecolor{WordTurquoiseLighter80}{RGB}{209, 238, 249}
\title{Conditions that enable a player to surely win in sequential quantum games}
\author{
	Theodore Andronikos$^1$\\
	$^1$Ionian University, Department of Informatics,\\
	7 Tsirigoti Square, 49100 Corfu, Greece;\\
	andronikos@ionio.gr \\
}
\begin{document}

\maketitle

\begin{abstract}
	This paper studies sequential quantum games under the assumption that the moves of the players are drawn from groups and not just plain sets. The extra group structure makes possible to easily derive some very general results characterizing this class of games. The main conclusion of this paper is that the specific rules of a game are absolutely critical. The slightest variation in the rules may have important impact on the outcome of the game. This work demonstrates that it is the combination of two factors that determines who wins: (i) the sets of admissible moves for each player, and (ii) the order of moves, i.e., whether the same player makes the first and the last move. Quantum strategies do not a priori prevail over classical strategies. By carefully designing the rules of the game it is equally feasible either to guarantee the fairness of the game, or to give the advantage to either player.

	\textbf{Keywords:} Quantum games; classical games; sequential quantum games; groups of actions; winning strategies.
\end{abstract}

\section{Introduction} \label{sec:Introduction}

Game theory is a powerful tool that can be especially helpful whenever one wants to study complex situations, concerning conflict, competition, cooperation, and, even, survival between rational entities, typically refereed to as players. Despite its rather playful terminology and setting, game theory has been used to address serious problems from a multitude of fields, ranging from economics, political and social sciences, to computer science, evolutionary biology and psychology. It is not at all surprising that for such a popular subjects there are many superb textbooks. Such a modern and pedagogical introduction is the book \cite{Dixit2015}. For the reader striving for more rigor, the book \cite{Myerson1997} by Myerson and the book by Maschler et al. \cite{Maschler2020} fit the bill. These books are the main references for the game-theoretic concepts used in this paper. Game theory provides an elaborate framework that facilitates the analysis of conflict, competition, and cooperation. This analysis is primarily based on the concept of players or agents. Every agent has her own agenda and her actions interfere with the plans of the other agents. When the game ends, every player evaluates her gains or losses using a payoff function. The assumption of rationality is essential in order to ensure that every agent strives to maximize her payoff.

The field of quantum computation promises to bring a new era in our computational abilities. Although its origins can be traced back to the early '80s, it took a while for the field to mature and reach its current state. Today there exist commercial quantum computers utilizing a rather modest number number of qubits. This is expected to change in the coming years; if the predictions come true, the number of available qubits will increase substantially. This, in turn, will offer a spectacular increase in our computational capabilities and enable us to tackle useful and practical instances of difficult decision and optimization problems. In this larger perspective it would seem that the emergence of quantum games was unavoidable. The recent field of quantum game theory studies the use of quantum ideas and concepts in classical games, such as the coin flipping, the prisoners' dilemma and many others.

\subsection{Related work}

The field of quantum games was initiated in 1999 when Meyer in his influential paper \cite{Meyer1999} presented the PQ penny flip game, which is the quantum version of the classical penny flip game. One of the major attractions of his formulation was the inclusion of two well-known fictional characters, Picard and Q, from the famous tv series Star Trek, that compete by acting upon a quantum coin. Picard represents the classical player and Q the quantum player. This means that Picard behaves as a player in the classical penny flip game, but Q can use arbitrary unitary operator, a hallmark of the quantum domain. Mayer proved that in this game Q can always win using the venerable Hadamard transform. Later, many researchers motivated by Meyer's ideas, generalized the same game pattern to $n$-dimensional quantum systems. This line of research was pursued in \cite{Wang2000}, \cite{Ren2007}, and \cite{Salimi2009}. Later, a complete correspondence of every finite variant of the PQ penny flip game with easily constructed finite automata was given in \cite{Andronikos2018}. Recently, the same game was extensively analyzed from the viewpoint of groups and its relation to the dihedral groups was established in \cite{Andronikos2021}. In many works the quantum player seems to have a clear advantage over the classical player. However, things are not that black and white, as was shown in \cite{Anand2015}. The authors there cleverly changed the rules of the PQ penny flip game and this allowed the classical player to win. Another general and interesting problem, that of quantum gambling based on Nash equilibrium, was also recently studied in \cite{Zhang2017}. In a similar vein, quantum coin flipping has ben used as a crucial ingredient of many quantum cryptographic protocols, where the usual suspects Alice and Bob assume the role of remote parties that have to agree on a random bit (see \cite{Bennett2014} for more details). This idea has been extended in \cite{Aharon2010} to quantum dice rolling in settings of multiple outcomes and more than two parties.

A particularly influential work, also from 1999, was the one by Eisert et al. in \cite{Eisert1999}. They introduced a novel technique, the Eisert-Wilkens-Lewenstein protocol, which is now extensively used in the literature. Using their technique, they defined a variant of the famous prisoners' dilemma and demonstrated that there exists a quantum strategy that is better than any classical strategy. Subsequently, many researchers followed that line of research, obtaining interesting results. For some recent developments one may consult \cite{Giannakis2019}, where the correspondence of typical conditional strategies used in the classical repeated prisoners' dilemma game to languages accepted by quantum automata was established, and \cite{Rycerz2020}, where the Eisert–Wilkens–Lewenstein protocol was generalized.

The connection between game theory, and in particular infinitely repeated games, and finite automata was initially investigated in the works of \cite{Neyman1985}, \cite{Rubinstein1986}, \cite{Abreu1988}, and \cite{Marks1990}. Subsequently, Meyer used quantum lattice gas automata to study Parrondo games in \cite{Meyer2002}. The application of probabilistic automata in the prisoners' dilemma game was undertaken in \cite{Bertelle2002}, while \cite{Suwais2014} considered the use of different automata variants in game theory. In \cite{Giannakis2015a} it was pointed out that quantum automata operating on infinite words can encode winning strategies for abstract quantum games.

It is worth noting that the idea to resort to unconventional means to achieve better results in classical games is not limited to the quantum domain. The most famous classical games, such as the prisoners' dilemma, have been cast in terms of biological notions (see \cite{Kastampolidou2020} for a survey). Actually, a lot of classical games can be expressed in the context of biological and bio-inspired processes (see \cite{Kastampolidou2020,Theocharopoulou2019,Kastampolidou2020a} for more references).

\subsection{Contribution}

This paper studies sequential quantum games, under the premise that the sets of moves available to the players are groups of unitary operators and not just plain vanilla sets. The additional structure present in the groups facilitates the derivation of quite general results that completely clarify the existence or not of winning strategies. Emphasis is placed not only in the case where both players are on an equal footing in terms of available actions and number of moves, but also to the case where one player has a strong advantage over the one. This advantage may concern either the number of moves, or the set of available actions, or both. It is established that who wins depends on the combination of two factors: (i) the sets of admissible moves for each player, and (ii) the order of moves, i.e., whether the same player makes the first and the last move. An important conclusion is that quantum strategies do not necessarily prevail over classical ones. The game designer has the ability to either guarantee the fairness of the game, or to give a decisive advantage to one of the players. The results obtained in this paper are, to the best of our knowledge, stated and proved in this generality for the first time in the literature.

\subsection{Organization}

This paper is organized as follows: Section \ref{sec:Introduction} provides the most relevant references, while Section \ref{sec:Background} explains the concepts and notation used in this paper. Sections \ref{sec:Results for Canonical Games} and \ref{sec:Results for Asymmetric Games} state and prove the main results of this work. Section \ref{sec:Discussion and Examples} presents two illustrative examples and discusses the most important findings. Finally, Section \ref{sec:Conclusions} contains the main conclusions, along with some pointers for future work.

\section{Background} \label{sec:Background}

This paper studies \emph{sequential} \emph{zero-sum} quantum games that are played between two rational players, namely player 1 and player 2. The term sequential means that there is a specific predefined order of play. The two players do not act simultaneously, but take turns making their moves, always adhering to the predefined order. A single move by any player will be referred to as a \emph{round}.
Of course, zero-sum implies that the amount of loss of player 1 is equal to the amount of gain of player 2 and vice versa. Before we proceed any further, let us formally state the assumptions on which our analysis is based.

\begin{enumerate}
	\item[\textbf{A1:}]	The object of the game is a fixed $n$-dimensional quantum system. This can be any finite dimensional quantum system, but if a more concrete visualization can enhance our perception, we may think of a quantum coin, or a quantum dice, or a quantum roulette.
	\item[\textbf{A2:}]	The game takes place in the $n$-dimensional complex Hilbert space $\mathcal{H}_{n}$. The computational basis, $H_{n}$, consists of the $n$ basis states: $\ket{0},$ $\ket{1},$ $\dots,$ $\ket{n - 1}$.
	\item[\textbf{A3:}]	The system is initially in one of its basis states. This is predefined and agreed upon by both players. Henceforth, we shall refer to this basis state as the \emph{initial} state and designate it by $\ket{q_{0}} \in H_{n}$.
	\item[\textbf{A4:}]	The two players take turns acting upon the system according to some predefined order.
	$A$ and $B$ denote the sets of admissible moves of players 1 and 2, respectively. The two players draw their moves from their respective set when acting upon the system. In this paper, we shall assume that $A$ and $B$ are nontrivial \emph{subgroups} of $U(n)$, the unitary group of degree $n$. We use the standard notation $A, B \leq U(n)$ to denote this fact, and we refer to them as \emph{action groups}.
	\item[\textbf{A5:}]	Both players know the initial state of the system, but after the game begins and while it unfolds the state of the system is unknown to both of them. After the game ends, the state of the system is measured in the computational basis. If it is found to be $\ket{q_{A}} \in H_{n}$ then player 1 wins, whereas if it is found to be $\ket{q_{B}} \in H_{n}$ then player 2 wins. We shall assume that $\ket{q_{A}}$ and $\ket{q_{B}}$ are \emph{different}, and refer to $\ket{q_{A}}$ and $\ket{q_{B}}$ as the \emph{target} state of player 1 and the \emph{target} state of player 2, respectively.
	\item[\textbf{A6:}]	Therefore, each player knows:
	\begin{enumerate}
		\item	the set of moves of the other player,
		\item	the initial state $\ket{q_{0}}$, and
		\item	the target state of the other player.
	\end{enumerate}
	The players do not know the state of system after the game begins, and until they make the final measurement.
\end{enumerate}

So, the situation in a quantum game is in stark contrast to the typical situation of classical games, where both players are aware of the state of the system as the game progresses, and are thus able, at least in principle, to adapt their strategies. The fact that none of the players knows the state of the system deprives them of critical information that might have helped them to win the game. In effect it places an additional difficulty upon the decision-making process of the players.

At this point, we make the following helpful remarks:
\begin{enumerate}
	\item	More general situations can be easily envisioned. For example, the initial state may be a (maximally) entangled state or the target state (for either player) might also be entangled (see also the important reference \cite{Anand2015}). Nevertheless, we shall not concern ourselves with such cases here.
	\item	It is certainly possible to think of games where the sets of moves $A$ and $B$ are just (finite) subsets of $U(n)$, i.e., without the additional structure of a group. However, by assuming that $A$ and $B$ are groups, we can easily arrive at interesting conclusions. The concepts and the notation from group theory will be kept to a minimum and, in any case, can be found in standard textbooks such as \cite{Artin2011} and \cite{Dummit2004}. For unitary groups in particular more specialized text such as \cite{Stillwell2008} and \cite{Hall2013} can be consulted.
\end{enumerate}

In the context of sequential quantum games, it is convenient to employ the terminology outlined in Definition \ref{def:Strong and Weak Winning Strategies}, adapted from \cite{Maschler2020} and \cite{Myerson1997} to fit our analysis. Informally, the word \emph{strategy} implies the existence of a plan on behalf of each player. This plan is composed of actions, or moves that the player makes as the game evolves.

\begin{definition} [Strong and weak winning strategies] \label{def:Strong and Weak Winning Strategies} \
	\begin{itemize}
		\item	A \emph{strategy} $\sigma_{1} = (A_1, \dots, A_{m_1})$ $( \sigma_{2} = (B_1, \dots, B_{m_2}) )$ for player 1 $($player 2$)$ is a sequence of admissible moves, where $m_1$ $(m_2)$ is the number of rounds that player 1 $($player 2$)$ acts on the system, and $A_1, \dots, A_{m_1} \in A$ $( B_1, \dots, B_{m_2} \in B )$.
		\item	A strategy $\sigma_{1}$ $( \sigma_{2} )$ is a \emph{strong winning} strategy for player 1 $($player 2$)$ if for \emph{every} strategy $\sigma_{2}$ $( \sigma_{1} )$ of player 2 $($player 1$)$, player 1 $($player 2$)$ wins the game with probability $1.0$.
		\item	A strategy $\sigma_{1}$ $( \sigma_{2} )$ is a \emph{weak winning} strategy for player 1 $($player 2$)$ if for \emph{there exists} a strategy $\sigma_{2}$ $( \sigma_{1} )$ of player 2 $($player 1$)$, such that player 1 $($player 2$)$ wins the game with probability $1.0$.
	\end{itemize}
\end{definition}

Since the main concern of this work is to examine when a player can win with probability $1.0$ no matter what the other player does, it focuses on strong winning strategies. This situation is more succinctly described by saying that the player \emph{surely wins} the game. The notion of weak winning strategy can serve simply as an indicator of a possible win, in contrast to a certain win.

It is conceivable that a game may be devised so that one of the players or even both players act on the system for a number of consecutive rounds. For instance, perhaps in a game one player makes the moves $A_{i}, A_{i + 1}, \dots, A_{j}$ back-to-back, without the other player being allowed to make an intermediate move. Such a situation can be greatly simplified due to the fact that the moves available to each player are elements of a group. This means that their composition is also an element of the same group and as such it is available to the player. In the preceding example, the action $C = A_{j} \dots A_{i + 1} A_{j}$, is also an admissible move for the player. Hence, for the rest of this paper, the following useful simplification is adopted: \emph{no player plays two or more successive rounds and the two players alternate each round}. This motivates the next Definition \ref{def:Canonical and Noncanonical Games}.

\begin{definition} [Canonical and noncanonical games] \label{def:Canonical and Noncanonical Games} \
	\begin{itemize}
		\item	A game is called \emph{canonical} if its duration is $2 m$ rounds, player 1 plays at odd rounds $1, 3, \dots, 2 m - 1$, and player 2 at even rounds $2, 4, \dots, 2 m$. In a canonical game both players play the same number of rounds taking turns making their moves, player 1 plays first, and player 2 plays last.
		\item	A game is called \emph{noncanonical} if it has $2 m + 1$ rounds, player 1 plays at odd rounds $1, 3, \dots, 2 m - 1, 2 m + 1$, and player 2 at even rounds $2, 4, \dots, 2 m$. In a noncanonical game both players take turns making their moves, but player 1  plays one more round than player 2 and makes the \emph{first} and the \emph{last} move.
	\end{itemize}
\end{definition}

Let us remark that it is the assumption that the repertoire of actions for each player constitutes a group that makes this simplification possible. If the player's set of moves is just a set and not a group, then it is quite possible that the composition of moves is not an element of this set.

It would not make any difference if in the definition of canonical games it was player 2 that made the first move. The important point is that canonical games capture the notions of symmetry and fairness, with respect to sequential quantum games, in the sense that both players have the exact same number of moves, one gets to act first and the other last. Similarly, in noncanonical games it is not substantial that player 1 makes the first and the last move. It would be equally viable to be player 2. What matters is that noncanonical games model asymmetric situations when one of the players seems to have the advantage of both starting and finishing the game. An abstract visual representation of the form of canonical and noncanonical games is given in Figures \ref{fig:Canonical Game} and \ref{fig:Noncanonical Game}, respectively. 

\vspace{0.5 cm}

\begin{figure}[H]
	\centering
	\begin{tikzpicture} [scale = 1.0]
		\begin{yquant}
			qubits {$\ket{q_0}$} GAME;
			hspace {0.2 cm} GAME;
			box {$A_1$} GAME;
			hspace {0.2 cm} GAME;
			box {$B_1$} GAME;
			hspace {0.2 cm} GAME;
			box {$A_2$} GAME;
			hspace {0.2 cm} GAME;
			box {$B_2$} GAME;
			hspace {0.2 cm} GAME;
			[draw = none]
			box {$\dots$} GAME;
			hspace {0.2 cm} GAME;
			box {$A_{m}$} GAME;
			hspace {0.2 cm} GAME;
			box {$B_{m}$} GAME;
			hspace {0.2 cm} GAME;
			measure GAME;
			hspace {0.2 cm} GAME;
			output {$\ket{q?}$} GAME;
		\end{yquant}
	\end{tikzpicture}
	\caption{This figures gives a schematic representation of a $2 m$ round canonical game in which the two players take turns playing $m$ round each. Player 1 makes the first move and player 2 the last. The question mark ``?'' is used to convey the fact that the outcome of the measurement cannot be determined with probability 1.0 beforehand.}
	\label{fig:Canonical Game}
\end{figure}

\vspace{0.3 cm}

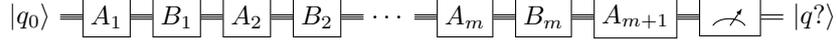
\begin{figure}[H]
	\centering
	\begin{tikzpicture} [scale = 1.0]
		\begin{yquant}
			qubits {$\ket{q_0}$} GAME;
			hspace {0.2 cm} GAME;
			box {$A_1$} GAME;
			hspace {0.2 cm} GAME;
			box {$B_1$} GAME;
			hspace {0.2 cm} GAME;
			box {$A_2$} GAME;
			hspace {0.2 cm} GAME;
			box {$B_2$} GAME;
			hspace {0.2 cm} GAME;
			[draw = none]
			box {$\dots$} GAME;
			hspace {0.2 cm} GAME;
			box {$A_{m}$} GAME;
			hspace {0.2 cm} GAME;
			box {$B_{m}$} GAME;
			hspace {0.2 cm} GAME;
			box {$A_{m + 1}$} GAME;
			hspace {0.2 cm} GAME;
			measure GAME;
			hspace {0.2 cm} GAME;
			output {$\ket{q?}$} GAME;
		\end{yquant}
	\end{tikzpicture}
	\caption{This figures shows a $2 m + 1$ round noncanonical game in which the player 1 plays $m + 1$ rounds but player 2 only $m$ rounds. Player 1 makes the first and the last move. In this case too the question mark ``?'' is used to convey the fact that the outcome of the measurement cannot be determined with probability 1.0 beforehand.}
	\label{fig:Noncanonical Game}
\end{figure}

\section{Results for canonical games} \label{sec:Results for Canonical Games}

In the subsequent study one well-known classical group will feature prominently. This is the group of all permutations of the indices $\{1, 2, \dots, n\}$, which is called the \emph{symmetric group}, and is denoted by $S_n$. If one distinguishes between the quantum and the classical player, as is often done in this area, the symmetric group in a sense encodes the maximal capabilities of the classical player. Thus it makes sense to consider it as a point of reference when analyzing how the classical player can act on the system. The elements of the symmetric group have a standard matrix representation (see also \cite{Artin2011} and \cite{Meyer2000}) in the form of permutation matrices. A \emph{permutation matrix} has a single 1 in each row and in each column and all its other entries are 0.

The identity operator, which is the identity element of the groups $S_{n}$ and $U(n)$, is denoted by $I$ and is succinctly written as $I = \begin{bmatrix} \ket{0} | \dots | \ket{i} | \dots | \ket{j} | \dots | \ket{n - 1} \end{bmatrix}$. Among permutation matrices, especially useful are the \emph{transposition} matrices. Transposition matrices are designated by $T_{i, j}$, where $0 \leq i \neq j \leq n - 1$.

\begin{align} \label{eq:Transposition Matrix Definition}
	T_{i, j}
	=
	\NiceMatrixOptions
	{
		code-for-first-row = \color{WordRed},
	}
	\begin{bNiceMatrix}[first-row]
		0^{th}	&	\dots	&	i^{th}	&	\dots	&	j^{th}	&	\dots	&	(n - 1)^{th}	\\
		\ket{0}	&	\dots	&	\ket{j}	&	\dots	&	\ket{i}	&	\dots	&	\ket{n - 1}
	\end{bNiceMatrix}
	\ .
\end{align}

Transposition matrices can interchange basis kets as follows:

\begin{align} \label{eq:Transposition Matrix Properties}
	T_{i, j} \ket{i} = \ket{j}
	\qquad \text{and} \qquad
	T_{i, j} \ket{j} = \ket{i}
	\ .
\end{align}

\begin{example}
For instance, when $n = 7$, the computational basis $H_{7}$ consists of the $7$ basis states: $\ket{0}, \ket{1}, \dots, \ket{6}$ shown below, and the corresponding symmetric group is $S_7$.

\begin{align} \label{eq:The Computational Basis $H_{7}$}
	\ket{0} =
	\begin{bNiceMatrix}
		1 \\
		0 \\
		0 \\
		0 \\
		0 \\
		0 \\
		0
	\end{bNiceMatrix}, \
	\ket{1} =
	\begin{bNiceMatrix}
		0 \\
		1 \\
		0 \\
		0 \\
		0 \\
		0 \\
		0
	\end{bNiceMatrix}, \
	\ket{2} =
	\begin{bNiceMatrix}
		0 \\
		0 \\
		1 \\
		0 \\
		0 \\
		0 \\
		0
	\end{bNiceMatrix}, \
	\ket{3} =
	\begin{bNiceMatrix}
		0 \\
		0 \\
		0 \\
		1 \\
		0 \\
		0 \\
		0
	\end{bNiceMatrix}, \
	\ket{4} =
	\begin{bNiceMatrix}
		0 \\
		0 \\
		0 \\
		0 \\
		1 \\
		0 \\
		0
	\end{bNiceMatrix}, \
	\ket{5} =
	\begin{bNiceMatrix}
		0 \\
		0 \\
		0 \\
		0 \\
		0 \\
		1 \\
		0
	\end{bNiceMatrix}, \
	\ket{6} =
	\begin{bNiceMatrix}
		0 \\
		0 \\
		0 \\
		0 \\
		0 \\
		0 \\
		1
	\end{bNiceMatrix} \ .
\end{align}

As a typical example of a transposition matrix we may consider $T_{3, 5}$.

\begin{align} \label{eq:Transposition Matrix $T_{3, 5}$}
	T_{3, 5}
	\overset{ (\ref{eq:Transposition Matrix Definition}) }{=}
	\begin{bNiceMatrix}
		\ket{0} & \ket{1} & \ket{2} & \ket{5} & \ket{4} & \ket{3} & \ket{6}
	\end{bNiceMatrix}
	=
	\begin{bNiceMatrix}
		1 & 0 & 0 & 0 & 0 & 0 & 0 \\
		0 & 1 & 0 & 0 & 0 & 0 & 0 \\
		0 & 0 & 1 & 0 & 0 & 0 & 0 \\
		0 & 0 & 0 & 0 & 0 & 1 & 0 \\
		0 & 0 & 0 & 0 & 1 & 0 & 0 \\
		0 & 0 & 0 & 1 & 0 & 0 & 0 \\
		0 & 0 & 0 & 0 & 0 & 0 & 1
	\end{bNiceMatrix} \ .
\end{align}

It is easy to verify that the properties outlined in (\ref{eq:Transposition Matrix Properties}) are satisfied:

\begin{align} \label{eq:Transposition Matrix $T_{3, 5}$ Properties}
	T_{3, 5} \ket{3} = \ket{5}
	\qquad \text{and} \qquad
	T_{3, 5} \ket{5} = \ket{3}
	\ .
\end{align}

\end{example}

The following fact, which is self-evident, is recorded as Proposition \ref{prp:Condition for Strong Winning Strategy} for future reference.

\begin{proposition} \label{prp:Condition for Strong Winning Strategy}
	In any type of game, canonical or noncanonical, player 1 $($player 2$)$ has a strong winning strategy, if and only if the state of the quantum system $\ket{\psi}$ prior to measurement is the target state $\ket{q_{A}}$ $(\ket{q_{B}})$ of player 1 $($player 2$)$.
\end{proposition}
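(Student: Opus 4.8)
The plan is to reduce everything to the Born rule, since that is the only place where probabilities enter the analysis. First I would observe that, because every admissible move is a unitary operator drawn from $A$ or $B$, and these operators are applied in a fixed deterministic order with no intermediate measurement (Assumptions A4 and A5), the state of the system remains a pure state throughout the game. Hence, once a strategy $\sigma_{1}$ for player 1 and a strategy $\sigma_{2}$ for player 2 are fixed, the pre-measurement state $\ket{\psi}$ is a uniquely determined unit vector of $\mathcal{H}_{n}$, obtained by applying the chosen operators to $\ket{q_{0}}$ in the prescribed sequence. The only randomness in the entire game is thus confined to the single terminal measurement in the computational basis.

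Next I would invoke the measurement postulate: the probability that this measurement yields player 1's target state is exactly $\left| \langle q_{A} | \psi \rangle \right|^{2}$. The heart of the argument is then the elementary equivalence $\left| \langle q_{A} | \psi \rangle \right|^{2} = 1 \iff \ket{\psi} = \ket{q_{A}}$, understood up to an irrelevant global phase. The forward implication is immediate from the postulate. For the converse, since $\ket{q_{A}}$ is a basis state and $\ket{\psi}$ is normalized, the Cauchy--Schwarz inequality gives $\left| \langle q_{A} | \psi \rangle \right| \leq 1$, with equality precisely when the two unit vectors are collinear, i.e. $\ket{\psi} = e^{i \theta} \ket{q_{A}}$; as a physical state this is just $\ket{q_{A}}$. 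This establishes that, for a single fixed pair of strategies, player 1 wins with probability $1.0$ if and only if $\ket{\psi} = \ket{q_{A}}$.

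Finally I would fold in the universal quantifier that distinguishes a \emph{strong} winning strategy. By Definition \ref{def:Strong and Weak Winning Strategies}, $\sigma_{1}$ is strongly winning exactly when player 1 wins with probability $1.0$ against \emph{every} $\sigma_{2}$; combining this with the per-play equivalence above, this holds if and only if the pre-measurement state equals $\ket{q_{A}}$ for all choices of $\sigma_{2}$. The statement for player 2 follows by the identical argument with $\ket{q_{B}}$ in place of $\ket{q_{A}}$. It is also worth recording that, because $\ket{q_{A}}$ and $\ket{q_{B}}$ are distinct basis states and hence orthogonal, $\ket{\psi} = \ket{q_{A}}$ forces player 2's winning probability $\left| \langle q_{B} | \psi \rangle \right|^{2}$ to vanish, consistent with the zero-sum nature of the game.

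There is no genuine obstacle here, which is exactly why the statement is labelled self-evident; the proof is essentially just the Born rule. The only points that require a little care are recognizing that the pre-measurement state is pure and deterministic (so that $\ket{\psi}$ is a single well-defined ket rather than a statistical mixture), disposing of the global phase when extracting $\ket{\psi} = \ket{q_{A}}$ from $\left| \langle q_{A} | \psi \rangle \right| = 1$, and correctly threading the ``for every opponent strategy'' quantifier through from the definition of a strong winning strategy.
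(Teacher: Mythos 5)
Your proposal is correct. Note, though, that the paper does not actually prove this proposition at all: it introduces it with ``The following fact, which is self-evident, is recorded as Proposition \ref{prp:Condition for Strong Winning Strategy} for future reference,'' so there is no written argument to compare against. What you have done is supply exactly the reasoning that makes the statement self-evident, and you have done so with more care than the paper itself takes: (i) you justify that the pre-measurement state is a single well-defined pure state once both strategies are fixed, because the game consists only of unitary moves in a prescribed order with no intermediate measurement; (ii) you invoke the Born rule and the Cauchy--Schwarz equality case to get $\left| \braket{q_A}{\psi} \right|^2 = 1 \iff \ket{\psi} = e^{i\theta}\ket{q_A}$, and you correctly flag that the proposition's phrase ``the state \emph{is} $\ket{q_A}$'' must be read modulo an irrelevant global phase --- a point the paper silently glosses over; and (iii) you thread the ``for every opponent strategy'' quantifier from Definition \ref{def:Strong and Weak Winning Strategies} through the per-play equivalence, which is precisely how the proposition is used later (e.g., in the proof of Theorem \ref{thr:CSG Same action group for Both Players}, where it is applied once per opponent strategy to produce the displayed equations). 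In short, your proof is a faithful and slightly more rigorous expansion of what the paper leaves implicit; there is no gap.
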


\begin{theorem}[The same action group for both players] \label{thr:CSG Same action group for Both Players}
	Consider a canonical game where both players use the same action group. If the action group contains $S_{n}$ as a subgroup, i.e., $S_{n} \leq A = B \leq U(n)$, no player has a strong winning strategy, but both players have a weak winning strategy.
\end{theorem}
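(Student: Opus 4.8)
The plan is to lean on Proposition \ref{prp:Condition for Strong Winning Strategy}, which reduces the whole question to controlling the single pre-measurement state. In a canonical $2m$-round game the players alternate starting with player 1, so the vector entering the final measurement is
\[
	\ket{\psi} = B_m A_m \cdots B_1 A_1 \ket{q_0} ,
\]
where $A_1, \dots, A_m \in A$ and $B_1, \dots, B_m \in B$ are the moves prescribed by the two (non-adaptive) strategies. By Proposition \ref{prp:Condition for Strong Winning Strategy}, player 1 (player 2) surely wins precisely when $\ket{\psi} = \ket{q_A}$ ($\ket{\psi} = \ket{q_B}$). Throughout I will use that $S_n \leq A = B$, that every transposition $T_{i, j}$ lies in both action groups and permutes the computational basis, and that $n \geq 2$ (forced by $\ket{q_A} \neq \ket{q_B}$).

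To show player 1 has no strong winning strategy, I would fix an arbitrary strategy $(A_1, \dots, A_m)$ of player 1 and exhibit a single response of player 2 that blocks $\ket{q_A}$; this suffices, because refuting a strong winning strategy only requires one defeating opponent strategy, which may be tailored to the fixed moves $(A_1, \dots, A_m)$. The natural choice is to let player 2 play the identity in rounds $1, \dots, m - 1$, producing the fixed state $\ket{\phi} = A_m \cdots A_1 \ket{q_0}$ entering player 2's last move, and then to set $B_m = I$ when $\ket{\phi} \neq \ket{q_A}$ and $B_m = T_{q_A, j}$ (any $j \neq q_A$) when $\ket{\phi} = \ket{q_A}$, which sends the state to $\ket{j} \neq \ket{q_A}$. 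Either way $\ket{\psi} \neq \ket{q_A}$, so the assumed strategy cannot surely win. This argument hinges on player 2 moving \emph{last}.

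For player 2 the order of play cuts the other way, so I would argue from invertibility rather than from reacting last. Fixing an arbitrary player-2 strategy, write $U = B_m \cdots B_1$ and let player 1 play the identity in rounds $2, \dots, m$, so that $\ket{\psi} = U A_1 \ket{q_0}$ depends only on player 1's opening move. If $(B_1, \dots, B_m)$ were a strong winning strategy, then $U A_1 \ket{q_0} = \ket{q_B}$ for \emph{every} $A_1 \in A$, forcing $A_1 \ket{q_0} = U^{-1} \ket{q_B}$ to be independent of $A_1$; but $A \supseteq S_n$ contains elements (for instance $I$ and $T_{q_0, j}$) that move $\ket{q_0}$ to distinct basis states, a contradiction. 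Hence player 2 has no strong winning strategy either.

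The weak winning strategies then come essentially for free from the transitivity of $S_n$ on the basis $H_n$: any basis state is carried to any other by a permutation lying in both $A$ and $B$. For player 1 I would take $A_1 = T_{q_0, q_A}$ (or $A_1 = I$ if $q_0 = q_A$) with all remaining moves equal to $I$, paired with the all-identity strategy for player 2, giving $\ket{\psi} = \ket{q_A}$; symmetrically, a single transposition of player 2 routing $\ket{q_0}$ to $\ket{q_B}$ against player 1's all-identity strategy yields $\ket{\psi} = \ket{q_B}$. The one delicate point I expect to be the main obstacle is that the intermediate vector $\ket{\phi}$ in the player-1 argument need not be a basis state---it could even be a uniform superposition fixed by all of $S_n$---but this causes no difficulty, since such a vector is never equal to the basis state $\ket{q_A}$, and the case split on whether $\ket{\phi} = \ket{q_A}$ exhausts every possibility cleanly.
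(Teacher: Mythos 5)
Your proposal is correct, and two of its three pieces match the paper's proof in substance: for player 1, you exploit exactly the same mechanism the paper does --- player 2 moves last and can spoil any fixed strategy of player 1 --- with the paper phrasing it as a contradiction between the two responses $(B_1, \dots, B_{m-1}, I)$ and $(B_1, \dots, B_{m-1}, T_{q_A, q_B})$, while you make an explicit case split on whether the state entering round $2m$ equals $\ket{q_A}$; these are interchangeable. The weak winning strategies are likewise the same idea up to a trivially different choice of permutations. Where you genuinely diverge is the player-2 half. The paper mirrors its player-1 argument: it has player 1 play $B_m^{-1}$ or $B_m^{-1} T_{q_A, q_B}$ as her last move, which is admissible \emph{only because} $A = B$ puts $B_m^{-1}$ in player 1's action group, and derives the same transposition contradiction. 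You instead perturb player 1's \emph{first} move, collapse player 2's entire strategy into the single unitary $U = B_m \cdots B_1$, and use injectivity: $U A_1 \ket{q_0} = \ket{q_B}$ for both $A_1 = I$ and $A_1 = T_{q_0, j}$ would force a unitary to send two distinct basis states to the same state. This is precisely the mechanism the paper saves for Theorem \ref{thr:CDG Canonical Game Different Action Group for Each Player} (its equations $C\ket{q_A} = \ket{q_B}$ and $C\ket{q_B} = \ket{q_B}$), and it buys you strictly more generality: your player-2 argument never uses $A = B$, only $S_n \leq A$ and $B \leq U(n)$, so it simultaneously proves the corresponding half of that later theorem. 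What the paper's route buys instead is symmetry of exposition, the two halves of its proof reading as mirror images, at the price of welding the argument to the hypothesis $A = B$. A minor point in your favor: your explicit treatment of the degenerate cases (the possibility $q_0 = q_A$ in the weak strategy, and the fact that the intermediate state $\ket{\phi}$ need not be a basis state) is glossed over in the paper.
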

\begin{proof}[Proof of Theorem \ref{thr:CSG Same action group for Both Players}]
	Suppose to the contrary that player 1 has a strong winning strategy $\sigma_{1}$ $=$ $(A_1,$ $A_2,$ $\dots,$ $A_m)$, where $A_1, A_2, \dots, A_m \in A$. This means that for every strategy $\sigma_{2}$ of player 2, player 1 will surely win. 

	Consider the following two strategies for player 2: $\sigma_{2} = (B_1, B_2, \dots, B_{m - 1}, I)$ and $\sigma'_{2}$ $=$ $(B_1,$ $B_2,$ $\dots,$ $B_{m - 1},$ $T_{q_A, q_B})$, where $B_1, B_2, \dots, B_{m - 1}, I, T_{q_A, q_B} \in B$ because $B$ contains $S_{n}$. In view of Proposition \ref{prp:Condition for Strong Winning Strategy}, it must hold that
	\begin{align}
		I A_m \dots B_1 A_1 \ket{q_{0}} &= \ket{q_{A}} \label{eq:CSG Strong Winning Strategy Condition for Player 1 - I} \qquad \text{and}
		\\
		T_{q_A, q_B} A_m \dots B_1 A_1 \ket{q_{0}} &= \ket{q_{A}} \label{eq:CSG Strong Winning Strategy Condition for Player 1 - II} \ .
	\end{align}
	Equation (\ref{eq:CSG Strong Winning Strategy Condition for Player 1 - I}) implies that
	\begin{align}
		A_m \dots B_1 A_1 \ket{q_{0}} = \ket{q_{A}} \label{eq:CSG Strong Winning Strategy Condition for Player 1 - III} \ .
	\end{align}
	But then
	\begin{align} \label{eq:CSG Strong Winning Strategy Contradiction for Player 1}
		T_{q_A, q_B} A_m \dots B_1 A_1 \ket{q_{0}}
		\overset
		{ ( \ref{eq:CSG Strong Winning Strategy Condition for Player 1 - III} ) }
		{ = }
		T_{q_A, q_B} \ket{q_{A}} = \ket{q_{B}} \ ,
	\end{align}
	which contradicts (\ref{eq:CSG Strong Winning Strategy Condition for Player 1 - II}). Thus, player 1 does not have a strong winning strategy.

	Symmetrically, assume again to the contrary that player 2 has a strong winning strategy $\sigma_{2} = (B_1, B_2, \dots, B_{m - 1}, B_{m})$, where $B_1, B_2, \dots, B_m \in B$. Consider then the following two strategies for player 1: $\sigma_{1} = (A_1, A_2, \dots, A_{m - 1}, B_{m}^{-1})$ and $\sigma'_{1} = (A_1, A_2, \dots, A_{m - 1},$ $B_{m}^{-1} T_{q_A, q_B})$, where $A_1,$ $A_2,$ $\dots,$ $A_{m - 1},$ $B_{m}^{-1},$ $B_{m}^{-1} T_{q_A, q_B}$ $\in$ $A$ because $A = B$ and contains $S_{n}$. From Proposition \ref{prp:Condition for Strong Winning Strategy} we know that
	\begin{align}
		B_{m} B_{m}^{-1} B_{m - 1} A_{m - 1} \dots B_1 A_1 \ket{q_{0}} &= \ket{q_{B}} \label{eq:CSG Strong Winning Strategy Condition for Player 2 - I} \qquad \text{and}
		\\
		B_{m} B_{m}^{-1} T_{q_A, q_B} B_{m - 1} A_{m - 1} \dots B_1 A_1 \ket{q_{0}} &= \ket{q_{B}} \label{eq:CSG Strong Winning Strategy Condition for Player 2 - II} \ .
	\end{align}
	Equation (\ref{eq:CSG Strong Winning Strategy Condition for Player 2 - I}) implies that
	\begin{align}
		B_{m - 1} A_{m - 1} \dots B_1 A_1 \ket{q_{0}} &= \ket{q_{B}} \ , \label{eq:CSG Strong Winning Strategy Condition for Player 2 - III}
	\end{align}
	which together with (\ref{eq:CSG Strong Winning Strategy Condition for Player 2 - II}) gives
	\begin{align} \label{eq:CSG Strong Winning Strategy Contradiction for Player 2}
		B_{m} B_{m}^{-1} T_{q_A, q_B} B_{m - 1} A_{m - 1} \dots B_1 A_1 \ket{q_{0}}
		\overset
		{ ( \ref{eq:CSG Strong Winning Strategy Condition for Player 2 - III} ) }
		{ = }
		T_{q_A, q_B} \ket{q_{B}} = \ket{q_{A}} \ .
	\end{align}
	This last result contradicts (\ref{eq:CSG Strong Winning Strategy Condition for Player 2 - II}) and, therefore, player 2 does not have a strong winning strategy either.

	The existence of weak winning strategies for player 1 is easy to establish; the obvious choice $\sigma_{1} = (I, I, \dots, T_{q_A, q_B})$ is indeed such a strategy for player's 2 strategy $\sigma_{2} = (T_{q_0, q_B}, I, \dots, I)$. In an analogous manner we can show that player 2 has weak winning strategies too.
\end{proof}

This result is quite general and highlights the critical role of $S_{n}$ in quantum games. Furthermore, it allows us to immediately state two useful corollaries.

\begin{corollary}[Two classical players with the same action group] \label{crl:CSG Same Action Group for Classical Players}
	When two classical players use the same action group $S_{n}$ in a canonical game, neither of them has an advantage over the other, in the sense that neither possesses a strong winning strategy.
\end{corollary}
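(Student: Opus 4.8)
The plan is to recognize that this corollary is nothing more than the specialization of Theorem \ref{thr:CSG Same action group for Both Players} to the instance $A = B = S_{n}$. The only work is to confirm that the hypotheses of that theorem are genuinely met when both players are classical and share the action group $S_{n}$, after which the desired conclusion follows verbatim.

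First I would identify the action group of a classical player with the symmetric group $S_{n}$: as recalled at the start of Section \ref{sec:Results for Canonical Games}, the classical player's admissible moves are exactly the permutation matrices, which form the standard matrix representation of $S_{n}$. Hence the canonical game under consideration is precisely one in which $A = B = S_{n}$.

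Next I would verify the chain $S_{n} \leq A = B \leq U(n)$ required by Theorem \ref{thr:CSG Same action group for Both Players}. Here $A = B = S_{n}$, so $S_{n} \leq A = B$ holds trivially because every group is a subgroup of itself. It remains to check that $S_{n} \leq U(n)$, i.e. that every permutation matrix is unitary; this is immediate since the columns of a permutation matrix are a reordering of the orthonormal computational basis vectors $\ket{0}, \dots, \ket{n-1}$, so the matrix preserves inner products and satisfies $P^{\dagger} P = I$. Thus all the premises of the theorem are in force with $A = B = S_{n}$.

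Finally I would simply invoke Theorem \ref{thr:CSG Same action group for Both Players}, whose conclusion states in particular that no player possesses a strong winning strategy, and read off exactly the assertion of the corollary. The only thing to check at all is the unitarity of permutation matrices, which is the ``main obstacle'' only in the most nominal sense; everything else is a direct reading of the theorem in the special case at hand.
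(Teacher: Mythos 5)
Your proposal is correct and matches the paper's approach exactly: the paper presents this corollary as an immediate specialization of Theorem \ref{thr:CSG Same action group for Both Players} to $A = B = S_{n}$, which is precisely what you do, with the (routine but worthwhile) verification that $S_{n} \leq S_{n}$ holds trivially and that permutation matrices are unitary, so $S_{n} \leq U(n)$.
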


\begin{corollary}[Two quantum players with the same action group] \label{crl:CSG Same Action Group for Quantum Players}
	In a canonical game between two quantum players, if both players use the same action group that contains $S_{n}$, no player possesses a strong winning strategy.
\end{corollary}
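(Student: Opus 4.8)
The plan is to derive this corollary as an immediate specialization of Theorem \ref{thr:CSG Same action group for Both Players}; essentially no new argument is required once the hypotheses are matched up. First I would unpack the phrase ``two quantum players'': a quantum player is one whose repertoire of admissible moves forms a subgroup of $U(n)$ that extends the purely classical capabilities encoded by $S_{n}$. Hence the stated assumption---that both players share a single action group containing $S_{n}$---is precisely the group-theoretic condition $S_{n} \leq A = B \leq U(n)$ that appears in the premise of Theorem \ref{thr:CSG Same action group for Both Players}.

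Next I would simply invoke that theorem. Since its hypothesis holds verbatim, so does its conclusion, and in particular the assertion that neither player has a strong winning strategy---which is exactly what the corollary claims. There is no need to rerun the contradiction argument built from the probe strategies $\sigma_{2} = (B_1, \dots, B_{m-1}, I)$ and $\sigma'_{2} = (B_1, \dots, B_{m-1}, T_{q_A, q_B})$, nor its symmetric counterpart for player 2, because the theorem already covers the full range of groups satisfying $S_{n} \leq A = B \leq U(n)$. Note also that only the ``no strong winning strategy'' half of the theorem is needed here, so I would not carry over the weak-winning-strategy construction.

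The only point that merits an explicit sentence---and the closest thing to an obstacle---is confirming that the two probe moves underlying the theorem, the identity $I$ and the transposition $T_{q_A, q_B}$, are genuinely admissible for each player in the present reading. This is immediate: both lie in $S_{n}$, and $S_{n}$ is by hypothesis a subgroup of the common action group, so each player may legitimately play them. With this observation recorded, the corollary follows directly.
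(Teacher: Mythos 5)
Your proposal is correct and matches the paper's approach exactly: the paper states this corollary as an immediate consequence of Theorem \ref{thr:CSG Same action group for Both Players}, with no separate argument, since ``both quantum players use the same action group containing $S_{n}$'' is just the hypothesis $S_{n} \leq A = B \leq U(n)$ restated. Your extra remark that $I$ and $T_{q_A, q_B}$ remain admissible probe moves is a harmless (and accurate) sanity check, but nothing beyond the direct invocation of the theorem is needed.
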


These two corollaries prove that when both players have the same repertoire, which can be either classical, e.g., $S_{n}$, or quantum, e.g., a subgroup of $U(n)$, they are on an equal footing. This is a natural outcome of the overall symmetry in this case. The symmetry inherent in a canonical game is combined with the symmetry in the moves available to the players. If there is a moral in this, it would be that the pursuit of fairness, in the sense that both players have a chance to win, without any of them being certain to win, must involve symmetry.

\section{Results for asymmetric games} \label{sec:Results for Asymmetric Games}

The results of the previous section indicate that if we seek games where one players can surely win, we must turn to asymmetric settings. For this reason, in this section we shall consider three asymmetric scenarios:

\begin{enumerate}
	\item	The game is noncanonical, that is player 1, who makes the first move, also makes the last move, while both players have the same action group.
	\item	The game is canonical, but the action group $B$ of player 2 is a proper subgroup of the action group $A$ of player 1: $B < A$.
	\item	The game is noncanonical and, at the same time, the action group $B$ of player 2 is a proper subgroup of the action group $A$ of player 1.
\end{enumerate}

In the first of the above cases, the perceived advantage of player 1 is still not enough to guarantee a strong winning strategy. Hence, in this case too neither player is certain to win. This is proved in the next Theorem \ref{thr:NSG Noncanonical Game Same Action Group for Both Players}.

\begin{theorem}[Noncanonical game, same action group] \label{thr:NSG Noncanonical Game Same Action Group for Both Players}
	Consider a noncanonical game where player 1 makes the first and the last move and both players use the same action group. If the action group contains $S_{n}$, i.e., $S_{n} \leq A = B \leq U(n)$, no player has a strong winning strategy, but both players have a weak winning strategy.
\end{theorem}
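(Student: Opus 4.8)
The plan is to follow the template of the proof of Theorem \ref{thr:CSG Same action group for Both Players}, while paying careful attention to the single structural asymmetry of the noncanonical game: here it is player 1, rather than player 2, who makes the final move $A_{m + 1}$, so the state prior to measurement is $A_{m + 1} B_{m} A_{m} \dots B_{1} A_{1} \ket{q_{0}}$. I would split the argument into the two ``no strong winning strategy'' claims and then dispatch the two weak winning strategies by exhibiting explicit cooperating pairs.

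The direction ``player 2 has no strong winning strategy'' is the easy one, because player 1 now occupies the last-mover position that player 2 enjoyed in the canonical game. Assuming a strong winning strategy $\sigma_{2} = (B_{1}, \dots, B_{m})$ for player 2, I would let player 1 compare the two strategies $\sigma_{1} = (A_{1}, \dots, A_{m}, I)$ and $\sigma'_{1} = (A_{1}, \dots, A_{m}, T_{q_{A}, q_{B}})$, both admissible since $I$ and $T_{q_{A}, q_{B}}$ lie in $S_{n} \leq A$. By Proposition \ref{prp:Condition for Strong Winning Strategy} the first choice forces the state preceding player 1's final move to equal $\ket{q_{B}}$; applying $T_{q_{A}, q_{B}}$ instead of $I$ then turns it into $\ket{q_{A}} \neq \ket{q_{B}}$, the desired contradiction.

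The direction ``player 1 has no strong winning strategy'' is the step I expect to be the genuine obstacle, precisely because player 2 no longer plays last and so cannot simply perturb the final operation. The key observation I would exploit is that $A = B$ is a \emph{group}, hence $A_{m + 1}^{-1} \in B$: player 2 can spend their last move $B_{m}$ to cancel player 1's fixed final move $A_{m + 1}$ and thereby recover effective control of the last operation, mirroring the role that $B_{m}^{-1}$ played in the second half of the proof of Theorem \ref{thr:CSG Same action group for Both Players}. Concretely, assuming a strong winning strategy $\sigma_{1} = (A_{1}, \dots, A_{m + 1})$ for player 1, I would have player 2 compare $\sigma_{2} = (B_{1}, \dots, B_{m - 1}, A_{m + 1}^{-1})$ with $\sigma'_{2} = (B_{1}, \dots, B_{m - 1}, A_{m + 1}^{-1} T_{q_{A}, q_{B}})$, both admissible because $A_{m + 1}^{-1} \in B$ and $T_{q_{A}, q_{B}} \in S_{n} \leq B$. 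The first choice forces the state $A_{m} B_{m - 1} \dots B_{1} A_{1} \ket{q_{0}}$ preceding $B_{m}$ to equal $\ket{q_{A}}$ by Proposition \ref{prp:Condition for Strong Winning Strategy}, whereupon the second choice yields $A_{m + 1} A_{m + 1}^{-1} T_{q_{A}, q_{B}} \ket{q_{A}} = \ket{q_{B}} \neq \ket{q_{A}}$, contradicting the assumption. This cancellation is exactly what reduces the noncanonical situation back to the canonical one.

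Finally, for the weak winning strategies I would simply display cooperating pairs. For player 1 the pair $\sigma_{1} = (I, \dots, I, T_{q_{A}, q_{B}})$ against $\sigma_{2} = (T_{q_{0}, q_{B}}, I, \dots, I)$ drives $\ket{q_{0}}$ first to $\ket{q_{B}}$ and then to $\ket{q_{A}}$; for player 2 the pair $\sigma_{2} = (T_{q_{0}, q_{B}}, I, \dots, I)$ against $\sigma_{1} = (I, \dots, I)$ leaves the system in $\ket{q_{B}}$. Apart from the cancellation idea, every step is a routine transcription of the canonical argument, so I would expect the final write-up to be short once that single trick is in place.
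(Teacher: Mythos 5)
Your proposal is correct and takes essentially the same route as the paper's own proof: the cancellation idea of having player 2 play $A_{m+1}^{-1}$ versus $A_{m+1}^{-1} T_{q_A, q_B}$ to neutralize player 1's fixed last move is exactly the paper's argument for player 1, and your treatment of player 2 (perturbing the final move by $I$ versus $T_{q_A, q_B}$) matches the paper as well. The explicit cooperating pairs you give for the weak winning strategies are the same ones the paper invokes by reference to Theorem \ref{thr:CSG Same action group for Both Players}.
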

\begin{proof}[Proof of Theorem \ref{thr:NSG Noncanonical Game Same Action Group for Both Players}]
	Suppose to the contrary that player 1 has a strong winning strategy $\sigma_{1}$ $=$ $(A_1,$ $A_2,$ $\dots,$ $A_{m},$ $A_{m + 1})$, where $A_1,$ $A_2,$ $\dots,$ $A_{m},$ $A_{m + 1}$ $\in$ $A$. This means that for every strategy $\sigma_{2}$ of player 2, player 1 will surely win.

	Consider the following two strategies of player 2: $\sigma_{2} = (B_1, B_2, \dots, B_{m - 1}, A_{m + 1}^{-1})$ and $\sigma'_{2}$ $=$ $(B_1,$ $B_2,$ $\dots,$ $B_{m - 1},$ $A_{m + 1}^{-1} T_{q_A, q_B})$, where $B_1, B_2, \dots, B_{m - 1}, A_{m + 1}^{-1}, A_{m + 1}^{-1} T_{q_A, q_B} \in B$ because $S_{n} \leq A = B \leq U(n)$. In view of Proposition \ref{prp:Condition for Strong Winning Strategy}, it must hold that
	\begin{align}
		A_{m + 1} A_{m + 1}^{-1} A_{m} \dots B_1 A_1 \ket{q_{0}} &= \ket{q_{A}} \label{eq:NSG Strong Winning Strategy Condition for Player 1 - I} \qquad \text{and}
		\\
		A_{m + 1} A_{m + 1}^{-1} T_{q_A, q_B} A_{m} \dots B_1 A_1 \ket{q_{0}} &= \ket{q_{A}} \label{eq:NSG Strong Winning Strategy Condition for Player 1 - II} \ .
	\end{align}
	Equation (\ref{eq:NSG Strong Winning Strategy Condition for Player 1 - I}) implies that
	\begin{align}
		A_m \dots B_1 A_1 \ket{q_{0}} = \ket{q_{A}} \label{eq:NSG Strong Winning Strategy Condition for Player 1 - III} \ .
	\end{align}
	Consequently,
	\begin{align} \label{eq:NSG Strong Winning Strategy Contradiction for Player 1}
		A_{m + 1} A_{m + 1}^{-1} T_{q_A, q_B} A_{m} \dots B_1 A_1 \ket{q_{0}}
		\overset
		{ ( \ref{eq:NSG Strong Winning Strategy Condition for Player 1 - III} ) }
		{ = }
		T_{q_A, q_B} \ket{q_{A}} = \ket{q_{B}} \ ,
	\end{align}
	which contradicts (\ref{eq:NSG Strong Winning Strategy Condition for Player 1 - II}). Thus, player 1 does not have a strong winning strategy.

	Similarly, in order to arrive at a contradiction, let us assume that player 2 has a strong winning strategy $\sigma_{2} = (B_1, B_2, \dots, B_{m - 1}, B_{m})$, where $B_1, B_2, \dots, B_m \in B$. Consider then the following two strategies of player 1: $\sigma_{1} = (A_1, A_2, \dots, A_{m}, I)$ and $\sigma'_{1} = (A_1, A_2, \dots, A_{m},$ $T_{q_A, q_B})$, where $A_1, A_2, \dots, A_{m}, I, T_{q_A, q_B} \in A$ because $A$ contains $S_{n}$. From Proposition \ref{prp:Condition for Strong Winning Strategy} we know that
	\begin{align}
		I B_{m} A_{m} \dots B_1 A_1 \ket{q_{0}} &= \ket{q_{B}} \label{eq:NSG Strong Winning Strategy Condition for Player 2 - I} \qquad \text{and}
		\\
		T_{q_A, q_B} B_{m} A_{m} \dots B_1 A_1 \ket{q_{0}} &= \ket{q_{B}} \label{eq:NSG Strong Winning Strategy Condition for Player 2 - II} \ .
	\end{align}
	Equation (\ref{eq:NSG Strong Winning Strategy Condition for Player 2 - I}) implies that
	\begin{align}
		B_{m} A_{m} \dots B_1 A_1 \ket{q_{0}} &= \ket{q_{B}} \ , \label{eq:NSG Strong Winning Strategy Condition for Player 2 - III}
	\end{align}
	which together with (\ref{eq:NSG Strong Winning Strategy Condition for Player 2 - II}) gives
	\begin{align} \label{eq:NSG Strong Winning Strategy Contradiction for Player 2}
		T_{q_A, q_B} B_{m} A_{m} \dots B_1 A_1 \ket{q_{0}}
		\overset
		{ ( \ref{eq:NSG Strong Winning Strategy Condition for Player 2 - III} ) }
		{ = }
		T_{q_A, q_B} \ket{q_{B}} = \ket{q_{A}} \ .
	\end{align}
	This last result contradicts (\ref{eq:NSG Strong Winning Strategy Condition for Player 2 - II}) and, therefore, player 2 does not have a strong winning strategy either.

	We can easily demonstrate the existence of weak winning strategies for both players using the same arguments as in Theorem \ref{thr:CSG Same action group for Both Players}.
\end{proof}

Let us emphasize that the above theorem holds when both players are classical and when both players are quantum.

Things are pretty much the same when the game is canonical, but the action group of one player is a proper subgroup of the action group of the other player, provided the smaller group contains $S_{n}$, as Theorem \ref{thr:CDG Canonical Game Different Action Group for Each Player} reveals.

\begin{theorem}[Canonical game, different action groups] \label{thr:CDG Canonical Game Different Action Group for Each Player}
	The following hold:
	\begin{itemize}
		\item	In a canonical game where the action group $A$ of player 1 contains $S_{n}$ and is a proper subgroup of the action group $B$ of player 2, i.e., $S_{n} \leq A < B \leq U(n)$, no player has a strong winning strategy.
		\item	Likewise, in a canonical game where the action group $B$ of player 2 contains $S_{n}$ and is a proper subgroup of the action group $A$ of player 1, i.e., $S_{n} \leq B < A \leq U(n)$, no player has a strong winning strategy.
	\end{itemize}
\end{theorem}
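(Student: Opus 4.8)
The plan is to establish the two negative claims separately and to observe at the outset that, although the two bullet points describe opposite inclusions, each of them forces \emph{both} $S_{n} \leq A$ and $S_{n} \leq B$: in the first case $B \supseteq A \supseteq S_{n}$, and in the second case $A \supseteq B \supseteq S_{n}$. Hence a single pair of arguments — one ruling out a strong winning strategy for player 1, one for player 2 — disposes of both bullets at once, and the proper-subgroup hypothesis itself plays no essential role beyond guaranteeing that $S_{n}$ sits inside both action groups. Throughout I would write the pre-measurement state of the $2m$-round canonical game as $B_{m} A_{m} \dots B_{1} A_{1} \ket{q_{0}}$ and invoke Proposition \ref{prp:Condition for Strong Winning Strategy} to translate ``surely wins'' into ``the state equals the relevant target''.

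For player 1, I would reproduce almost verbatim the first half of the proof of Theorem \ref{thr:CSG Same action group for Both Players}. Assuming a strong winning strategy $\sigma_{1} = (A_{1}, \dots, A_{m})$, I would let player 2 answer with the two strategies $(I, \dots, I)$ and $(I, \dots, I, T_{q_{A}, q_{B}})$, which are legal precisely because $S_{n} \leq B$ guarantees $I, T_{q_{A}, q_{B}} \in B$. The first forces $A_{m} \dots A_{1} \ket{q_{0}} = \ket{q_{A}}$, whence the second yields $T_{q_{A}, q_{B}} \ket{q_{A}} = \ket{q_{B}} \neq \ket{q_{A}}$, the required contradiction. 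The decisive feature here is that player 2 moves \emph{last}, so a single transposition applied at the very end converts player 1's target into player 2's.

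The ruling-out for player 2 is where the argument must genuinely depart from Theorem \ref{thr:CSG Same action group for Both Players}. There the symmetric trick had player 1 cancel player 2's final move by playing $B_{m}^{-1}$; but when $A < B$ this is illegitimate, since $B_{m}^{-1}$ need not lie in $A$. I would instead exploit that player 1 moves \emph{first}, together with $S_{n} \leq A$. Assuming a strong winning strategy $\sigma_{2} = (B_{1}, \dots, B_{m})$, I would freeze all of player 1's moves after the first to $I$, so that the pre-measurement state collapses to $C A_{1} \ket{q_{0}}$ with $C = B_{m} \dots B_{1}$ fixed and unitary. Comparing the opening moves $A_{1} = I$ and $A_{1} = T_{q_{0}, k}$ (for any index $k \neq q_{0}$, which exists since $\ket{q_{A}} \neq \ket{q_{B}}$ forces $n \geq 2$), a strong winning strategy would demand $C \ket{q_{0}} = \ket{q_{B}}$ and $C \ket{k} = \ket{q_{B}}$ simultaneously, contradicting the injectivity of the unitary $C$. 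I expect this asymmetry — that player 2's case cannot be handled by the cancellation trick and must instead be driven entirely by player 1's opening move — to be the one genuinely new ingredient; everything else is bookkeeping inherited from the canonical-game results already proved.
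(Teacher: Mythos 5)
Your proposal is correct, and its decisive ingredient --- ruling out a strong winning strategy for player 2 by varying player 1's \emph{opening} move and invoking the injectivity of the fixed unitary remainder $C$ --- is exactly the paper's key new step for the first bullet: there the paper plays $T_{q_0, q_A}$ versus $T_{q_0, q_B}$ as opening moves to force $C \ket{q_A} = C \ket{q_B} = \ket{q_B}$, while you play $I$ versus $T_{q_0, k}$ to force $C \ket{q_0} = C \ket{k} = \ket{q_B}$; both contradict unitarity in the same way. Where you genuinely depart is in the organization: the paper does \emph{not} use this argument for the second bullet, but instead handles that player-2 case by the cancellation trick of Theorem \ref{thr:CSG Same action group for Both Players} (player 1 plays $B_{m}^{-1}$ and $B_{m}^{-1} T_{q_A, q_B}$, legal there precisely because $B < A$ puts $B_{m}^{-1}$ inside $A$), whereas you observe that both bullets force $S_n \leq A$ and $S_n \leq B$ and then run a single pair of arguments uniformly. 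Your route buys two things: (i) a strictly more general statement, since the nesting $A < B$ or $B < A$ is never used, only $S_n \leq A$ and $S_n \leq B$; and (ii) immunity to the degenerate cases $q_0 = q_A$ or $q_0 = q_B$, in which the paper's transpositions $T_{q_0, q_A}$ or $T_{q_0, q_B}$ are ill-defined (or must be read as $I$), while your $T_{q_0, k}$ with $k \neq q_0$ always exists because $n \geq 2$. What the paper's split gains in exchange is only expository: it illustrates that the subgroup relation itself can be exploited via cancellation, but nothing in the theorem's conclusion requires it.
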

\begin{proof}[Proof of Theorem \ref{thr:CDG Canonical Game Different Action Group for Each Player}] \
	\begin{itemize}
		\item	Showing that player 1 does not have a strong winning strategy can be achieved by adapting the analogous arguments in Theorem \ref{thr:CSG Same action group for Both Players}.

		Therefore, we focus on showing that player 2 does not have a strong winning strategy either. To this end, let us suppose to the contrary that player 2 has a strong winning strategy $\sigma_{2} = (B_1, B_2, \dots, B_{m})$, where $B_1, B_2, \dots, B_{m} \in B$. This means that for every strategy $\sigma_{1}$ of player 1, player 2 will surely win. Consider the following two strategies of player 1: $\sigma_{1} = (T_{q_0, q_A}, A_2, \dots, A_{m})$ and $\sigma'_{2} = (T_{q_0, q_B}, A_2, \dots, A_{m})$, where $T_{q_0, q_A}, T_{q_0, q_B}, A_2, \dots, A_{m} \in A$ because $A$ contains $S_{n}$. Proposition \ref{prp:Condition for Strong Winning Strategy} entails that
		\begin{align}
			B_{m} A_{m} \dots B_1 T_{q_0, q_A} \ket{q_{0}} &= \ket{q_{B}} \label{eq:CDG Strong Winning Strategy Condition for Player 2 - I} \qquad \text{and}
			\\
			B_{m} A_{m} \dots B_1 T_{q_0, q_B} \ket{q_{0}} &= \ket{q_{B}} \label{eq:CDG Strong Winning Strategy Condition for Player 2 - II} \ .
		\end{align}
		If $C = B_{m} A_{m} \dots B_1$, then $C$ is a unitary operator, being the composition of unitary operators. Using $C$, equations (\ref{eq:CDG Strong Winning Strategy Condition for Player 2 - I}) and (\ref{eq:CDG Strong Winning Strategy Condition for Player 2 - II}) imply that
		\begin{align}
			C \ket{q_{A}} &= \ket{q_{B}} \label{eq:CDG Strong Winning Strategy Condition for Player 2 - III} \qquad \text{and}
			\\
			C \ket{q_{B}} &= \ket{q_{B}} \label{eq:CDG Strong Winning Strategy Condition for Player 2 - IV} \ .
		\end{align}
		This is impossible because $C$ is a unitary operator and, as such, sends different orthonormal vectors to different orthonormal vectors. Thus, player 2 does not have a strong winning strategy.

		\item	In this case it is straightforward to prove that player 2 does not have a strong winning strategy by suitably modifying analogous arguments in Theorem \ref{thr:CSG Same action group for Both Players}.

		Thus, we concentrate on showing that player 1 does not have a strong winning strategy either. To arrive at a contradiction, we suppose that player 1 has a strong winning strategy $\sigma_{1} = (A_1, A_2, \dots, A_{m})$, where $A_1, A_2, \dots, A_{m} \in A$. This means that for every strategy $\sigma_{2}$ of player 2, player 1 will surely win. Let us consider the following two strategies of player 2: $\sigma_{2} = (B_1, B_2, \dots, B_{m - 1}, I)$ and $\sigma'_{2} = (B_1, B_2, \dots, B_{m - 1},$ $T_{q_A, q_B})$, where $B_1, B_2, \dots, B_{m - 1}, I, T_{q_A, q_B} \in B$ because $B$ contains $S_{n}$. In view of Proposition \ref{prp:Condition for Strong Winning Strategy}, it must hold that
		\begin{align}
			I A_{m} \dots B_1 A_1 \ket{q_{0}} &= \ket{q_{A}} \label{eq:CDG Strong Winning Strategy Condition for Player 1 - I} \qquad \text{and}
			\\
			T_{q_A, q_B} A_{m} \dots B_1 A_1 \ket{q_{0}} &= \ket{q_{A}} \label{eq:CDG Strong Winning Strategy Condition for Player 1 - II} \ .
		\end{align}
		Equation (\ref{eq:CDG Strong Winning Strategy Condition for Player 1 - I}) implies that
		\begin{align}
			A_m \dots B_1 A_1 \ket{q_{0}} = \ket{q_{A}} \label{eq:CDG Strong Winning Strategy Condition for Player 1 - III} \ .
		\end{align}
		Consequently,
		\begin{align} \label{eq:CDG Strong Winning Strategy Contradiction for Player 1}
			T_{q_A, q_B} A_{m} \dots B_1 A_1 \ket{q_{0}}
			\overset
			{ ( \ref{eq:CDG Strong Winning Strategy Condition for Player 1 - III} ) }
			{ = }
			T_{q_A, q_B} \ket{q_{A}} = \ket{q_{B}} \ ,
		\end{align}
		which contradicts (\ref{eq:CDG Strong Winning Strategy Condition for Player 1 - II}). This concludes the proof that player 1 does not have a strong winning strategy.
	\end{itemize}
\end{proof}

We have thus arrived at an important realization. To ensure that one player has a strong winning strategy is not enough for her to play first and last if the other player has the same set of moves. Perhaps even more importantly, we have discovered that in a canonical game even if one player has access to a much larger action group, let's say the entire $U(n)$, she may still not win with probability $1.0$, provided the other player's action group contains $S_{n}$. Again, the hypothesis that the smaller group contains $S_{n}$ is absolutely critical. Unavoidably, to ensure the existence of a strong winning strategy for player 1 we must set the rules of the game so that she makes the first and the last move and, at the same time, her action group $A$ is sufficiently larger than the action group $B$ of player 2. This is what the following Theorem \ref{thr:NDG Noncanonical Game Different Action Group for Each Player} asserts. The proof is simple and is based on the concept of a state being unaffected by the action of all the elements of a group. In particular, we shall call a state $\ket{\psi}$ \emph{invariant} under the action of the group $B$, if for \emph{every} $B_i \in B$ it holds that $B_i \ket{\psi} = \ket{\psi}$. Of course, the identity operator $I$ fixes every state, but in this case the requirement is that all operators in $B$ fix the state $\ket{\psi}$.

\begin{theorem}[Noncanonical game and different action group] \label{thr:NDG Noncanonical Game Different Action Group for Each Player}
	Consider a noncanonical game where player 1 makes the first and the last move.
	\begin{itemize}
		\item	If the action group $A$ of player 1 contains $S_{n}$ as well as some action $U \in A$, which can drive the system from its initial state to a state $\ket{\psi}$ invariant under the action of $B$, then player 1 has a strong winning strategy.
		\item	If the action group $A$ of player 1 does not contain $S_{n}$, but contains some action $U \in A$ that can drive the system from its initial state to a state $\ket{\psi}$ invariant under the action of $B$, and her target state is the initial state, then player 1 has a strong winning strategy.
	\end{itemize}
\end{theorem}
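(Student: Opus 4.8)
The plan is to exploit the fact that player 1 enjoys one extra move—she plays both first and last—by using her opening move to \emph{park} the system in the $B$-invariant state $\ket{\psi}$, thereby neutralizing every subsequent action of player 2, and then using her final move to steer from $\ket{\psi}$ to her target. By Proposition \ref{prp:Condition for Strong Winning Strategy}, it suffices to exhibit a strategy $\sigma_1 = (A_1, \dots, A_{m+1})$ for which
\[
	A_{m+1} B_m A_m \dots B_1 A_1 \ket{q_0} = \ket{q_A}
\]
holds for \emph{every} choice of $B_1, \dots, B_m \in B$.

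First I would set $A_1 = U$, so that $A_1 \ket{q_0} = \ket{\psi}$, and take the intermediate moves $A_2 = \dots = A_m = I$, which are admissible since $A$ is a group. The crucial observation is that, once the system reaches $\ket{\psi}$, the invariance property $B_i \ket{\psi} = \ket{\psi}$ ensures that every move of player 2 leaves the state unchanged, and each intervening identity of player 1 does likewise. Hence, regardless of player 2's choices, the state immediately before player 1's final move is exactly $\ket{\psi}$, and the whole product collapses to $A_{m+1}\ket{\psi}$.

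It then remains only to choose $A_{m+1} \in A$ so that $A_{m+1}\ket{\psi} = \ket{q_A}$. For the first bullet, since $A$ contains $S_n$ it contains the transposition $T_{q_0, q_A}$, and since $U \in A$ we have $U^{-1} \in A$; setting $A_{m+1} = T_{q_0, q_A} U^{-1} \in A$ gives $A_{m+1}\ket{\psi} = T_{q_0, q_A} U^{-1} U \ket{q_0} = T_{q_0, q_A}\ket{q_0} = \ket{q_A}$. For the second bullet, where $A$ need not contain $S_n$ but the target coincides with the initial state, $\ket{q_A} = \ket{q_0}$, I would simply take $A_{m+1} = U^{-1} \in A$, whence $A_{m+1}\ket{\psi} = U^{-1} U \ket{q_0} = \ket{q_0} = \ket{q_A}$.

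The heart of the argument is the parking step, which is entirely forced by the invariance hypothesis; the only genuine obstacle is guaranteeing that a suitable final move is actually available to player 1. This is precisely where the two hypotheses diverge: in the first case the assumption $S_n \leq A$ is indispensable, because without a permutation carrying $\ket{q_0}$ to the arbitrary target $\ket{q_A}$ there may be no admissible $A_{m+1}$, whereas in the second case the weaker assumption suffices exactly because the target is the initial state, so the inverse $U^{-1}$—which is automatically in $A$—already completes the task.
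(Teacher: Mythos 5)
Your proof is correct and follows essentially the same route as the paper: the same strategy $\sigma_{1} = (U, I, \dots, I, T_{q_0, q_A}U^{-1})$ (respectively $(U, I, \dots, I, U^{-1})$ for the second bullet), with the same ``parking'' argument that $B$-invariance of $\ket{\psi}$ neutralizes all of player 2's moves until player 1's final action steers the system to her target. Nothing is missing; your explicit remark on why $S_n \leq A$ is needed only in the first case matches the paper's own discussion.
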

\begin{proof}[Proof of Theorem \ref{thr:NDG Noncanonical Game Different Action Group for Each Player}] \
	\begin{itemize}
		\item	The fact that there exists $U \in A$ such that $U \ket{q_{0}} = \ket{\psi}$, implies that the strategy $\sigma_{1}$ $=$ $(U,$ $I,$ $\dots,$ $I,$ $T_{q_0, q_A} U^{- 1})$ is a strong winning strategy for player 1. Note that the composition $T_{q_0, q_A} U^{- 1}$ constitutes a move available to player 1 because $A$ contains $S_{n}$. It is easy to see that when player 2 makes her first move, the system is already at state $\ket{\psi}$, which, according to the hypothesis is invariant under the action of all operators in $B$. Thus, no matter what strategy player 2 employs the system will remain at state $\ket{\psi}$ until the last round of the game. Then during the last round of the game player 1 will drive the system to her target state $\ket{q_{A}}$ via the action $T_{q_0, q_A} U^{- 1}$.
		\item	In this case, although there exists $U \in A$ such that $U \ket{q_{0}} = \ket{\psi}$, the action group $A$ does not contain $S_{n}$ and, hence, transpositions cannot be used. The extra hypothesis in this case, which is that the target state of player 1 is $\ket{q_{0}}$, is critical. The strategy $\sigma_{1} = (U, I, \dots, I, U^{- 1})$ is a strong winning strategy for player 1. Again, it is easy to verify that when player 2 makes her first move, the system is already at state $\ket{\psi}$, which, according to the hypothesis is invariant under the action of all operators in $B$. Thus, no matter what strategy player 2 employs the system will remain at state $\ket{\psi}$ until the last round of the game. Then during the last round of the game player 1 will drive the system to back to its initial state $\ket{q_{0}}$, which is also her target state, via the action $U^{- 1}$.
	\end{itemize}
\end{proof}

At this point it may be expedient to clarify that the situation described in Theorem \ref{thr:NDG Noncanonical Game Different Action Group for Each Player} can only be realized when the action groups of the two players are different. It is straightforward to see that if $A = B$, whatever player 1 does to the system via a transformation $U$, can be undone by player 2 via $U^{- 1}$, which is available to her. In such a case, evidently, there exists no invariant state.

Theorem \ref{thr:NDG Noncanonical Game Different Action Group for Each Player} can help us to complete the picture formed by the previous three theorems. Theorems \ref{thr:CSG Same action group for Both Players} -- \ref{thr:CDG Canonical Game Different Action Group for Each Player} are somewhat negative by asserting the absence of a strong winning strategy. Theorem \ref{thr:NDG Noncanonical Game Different Action Group for Each Player} is more positive by giving conditions, or, a ``recipe'' for the design of games that ensure that one player will surely win. For example, this strategy is implicitly followed in the original PQ penny flip game of \cite{Meyer1999} and its clever generalizations to higher dimensions, like roulette, accomplished in \cite{Wang2000}, \cite{Ren2007}, and \cite{Salimi2009}. The following Table \ref{tbl:Conditions for the Existence of Strong Winning Strategy} summarizes the results we have obtained.

\hspace{0.5 cm}

\begin{table}[H]
	{\small
		\caption{This table presents the previous results in a compact form so that one can check the conditions under which a player will surely win. The abbreviation ``sws'' stands for strong winning strategy.}
		\label{tbl:Conditions for the Existence of Strong Winning Strategy}
		\renewcommand{\arraystretch}{1.5}
		\begin{center}
		\begin{tabular}{c !{\vrule width 1.25 pt} c|c}
			\Xhline{4 \arrayrulewidth}
			Type of game & sws for player 1 & sws for player 2
			\\
			\Xhline{3 \arrayrulewidth}
			Canonical \& same group (classical vs. classical) & No & No
			\\
			\hline
			Canonical \& same group (quantum vs. quantum) & No & No
			\\
			\hline
			Noncanonical \& same group (classical vs. classical) & No & No
			\\
			\hline
			Noncanonical \& same group (quantum vs. quantum) & No & No
			\\
			\hline
			Canonical \& $S_{n} \leq A < B \leq U(n)$ & No & No
			\\
			\hline
			Canonical \& $S_{n} \leq B < A \leq U(n)$ & No & No
			\\
			\hline
			Noncanonical \& $B < A$ \& $S_{n} \leq A$ \& $\ket{\psi}$ invariant & Yes & No
			\\
			\hline
			Noncanonical \& $B < A$ \& $\ket{\psi}$ invariant \& $\ket{q_0} = \ket{q_A}$ & Yes & No
			\\
			\Xhline{4 \arrayrulewidth}
		\end{tabular}
		\end{center}
		\renewcommand{\arraystretch}{1.0}
	}
\end{table}

\section{Discussion and examples} \label{sec:Discussion and Examples}

In this section we discuss the implications of the previous results and devise some toy games in the form of examples to see how the method outlined in Theorem \ref{thr:NDG Noncanonical Game Different Action Group for Each Player} can be used to guarantee that player 1 surely wins. The examples will employ finite automata, which, as mentioned in 
the Introduction, can be used as a powerful visual aid to enhance our understanding of the evolution of the state of the system and the progression of the game.

\begin{example} \label{xmp:A 4-Round Canonical Game with Different Groups}
	Let us consider a canonical $4$-round game on a $7$-dimensional quantum system. Let us further assume that $A = U(7)$, $B = S_{7}$, $\ket{q_0} = \ket{0}$, $\ket{q_A} = \ket{6}$, and $\ket{q_B} = \ket{0}$. Player 1 has the advantage of having a much larger group to choose from. $U(n)$ contains the $7$-dimensional quantum Fourier transform, which we denote by $F_{7}$. This is an extremely useful unitary transform with an elegant matrix representation (see \cite{Nielsen2010} for details). Its matrix representation is shown below, along with the result of its application to $\ket{0}$.
	\begin{align} \label{eq:The Quantum Fourier Transform $F_{7}$}
		F_{7}
		=
		\frac{1}{\sqrt{7}}
		\begin{bNiceMatrix}
			1 & 1 & 1 & 1 & 1 & 1 & 1 \\
			1 & \omega^{} & \omega^{2} & \omega^{3} & \omega^{4} & \omega^{5} & \omega^{6} \\
			1 & \omega^{2} & \omega^{4} & \omega^{6} & \omega^{} & \omega^{3} & \omega^{5} \\
			1 & \omega^{3} & \omega^{6} & \omega^{2} & \omega^{5} & \omega^{} & \omega^{4} \\
			1 & \omega^{4} & \omega^{} & \omega^{5} & \omega^{2} & \omega^{6} & \omega^{3} \\
			1 & \omega^{5} & \omega^{3} & \omega^{} & \omega^{6} & \omega^{4} & \omega^{2} \\
			1 & \omega^{6} & \omega^{5} & \omega^{4} & \omega^{3} & \omega^{2} & \omega^{}
		\end{bNiceMatrix}
		\quad \text{and} \quad
		\ket{\psi}
		=
		F_{7} \ket{0}
		=
		\frac{1}{\sqrt{7}}
		\begin{bNiceMatrix}
			1 \\
			1 \\
			1 \\
			1 \\
			1 \\
			1 \\
			1
		\end{bNiceMatrix} \ ,
	\end{align}
	where $\omega = e^{\frac{2 \pi i}{7}}$. Hence, player 1 can drive the system to the state $\ket{\psi} = \frac{1}{\sqrt{7}} \sum_{i = 0}^{6} \ket{i}$ during her first move by immediately applying $F_{7}$ to the initial state of the system. This particular state is invariant under $S_{n}$, so the system will remain at this state, irrespective of the second player's move.

	In the third round, player 1  makes her second and last move. If she decides to apply the composition of $T_{0, 6}$ and inverse action $F_{7}^{-1} = F_{7}^{\dagger}$, i.e., the action $T_{0, 6} F_{7}^{\dagger}$, then she will drive the system to her target state $\ket{6}$.

	However, it is quite possible that during her last move player 2 will use $T_{0, 6}$ to send the system back to the initial state $\ket{0}$, which happens to be her target state, and thus win the game.

	If, on the other hand, player 1 decides to use the identity operator $I$ as her last move, then player 2 will be unable to change the state of the system and, consequently, the system will be at state $\ket{\psi}$ prior to measurement. In this scenario player 1 fails again to win the game with probability $1.0$, having just $\frac{1}{7}$ probability to win.

	Naturally, all these are to be expected since Theorem \ref{thr:CDG Canonical Game Different Action Group for Each Player} asserts the nonexistence of a strong winning strategy. These two scenarios are graphically depicted in the automaton of Figure \ref{fig:Example 1 Automaton}.

	\begin{figure}[H]
		\centering
		\begin{tikzpicture}
			[
			scale = 1.3,
			state/.style =
			{
				circle, 
				draw, 
				minimum size = 1.0 cm,
				semithick,
			},
			every loop/.style = {min distance = 12 mm}
			]
			\def \angle {360/7}
			\node [ state, fill = GreenLighter2!20, initial, initial text = { initial state }, initial where = right, initial distance = 0.5 cm, accepting ] (q0)
			at ( {3 * cos(0 * \angle)}, {3 * sin(0 * \angle)} ) { $ \ket{0} $ };
			\node (t2) at ( 4.75, -0.5 ) { \& Player's 2 target };
			\node [ state ] (q1) at ( {3 * cos(1 * \angle)}, {3 * sin(1 * \angle)} ) { $ \ket{1} $ };
			\node [ state ] (q2) at ( {3 * cos(2 * \angle)}, {3 * sin(2 * \angle)} ) { $ \ket{2} $ };
			\node [ state ] (q3) at ( {3 * cos(3 * \angle)}, {3 * sin(3 * \angle)} ) { $ \ket{3} $ };
			\node [ state ] (q4) at ( {3 * cos(4 * \angle)}, {3 * sin(4 * \angle)} ) { $ \ket{4} $ };
			\node [ state ] (q5) at ( {3 * cos(5 * \angle)}, {3 * sin(5 * \angle)} ) { $ \ket{5} $ };
			\node [ state, fill = RedPurple!15 ] (q6) at ( {3 * cos(6 * \angle)}, {3 * sin(6 * \angle)} ) { $ \ket{6} $ };
			\node (t1) at ( 3.5, -2.375 ) { Player's 1 target };
			\node [ state, fill = WordBlueDark!50 ] (psi) at ( 0.0, 0.0 ) { $ \ket{\psi} $ };
			\path[->]
			(q0) edge [bend right = 25] node [ above ] { $F_{7}$ } (psi)
			(q0) edge [bend right = 25] node [ left ] { $T_{0, 6}$ } (q6)
			(q6) edge [bend right = 25] node [ right ] { $T_{0, 6}$ } (q0)
			(psi) edge [bend right = 25] node [ above ] { $F_{7}^{\dagger}$ } (q0)
			(psi) edge [out = 150, in = -150, loop] node [left] { $I, T_{i, j}$ } ();
		\end{tikzpicture}
		\caption{This automaton captures the evolution of the game studied in this example according to the two possible strategies of player 1 outlined above.}
		\label{fig:Example 1 Automaton}
	\end{figure}
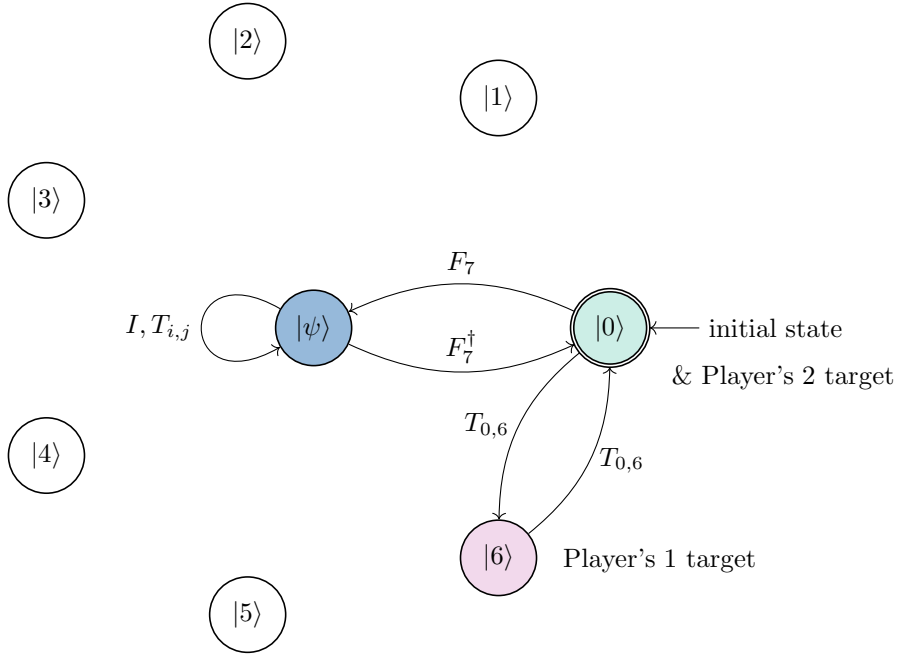
\end{example}

\begin{example} \label{xmp:A 5-Round Noncanonical Game with Different Groups}
	In the previous game player 1 cannot surely win. Therefore, in order to guarantee that player 1 wins with probability $1.0$, the rules of the game must change. In particular, the asymmetry and the disadvantage for player 2 must increase. To this end, let us consider a noncanonical $5$-round game on the same $7$-dimensional quantum system, in which, again, we assume that $A = U(7)$, $B = S_{7}$, $\ket{q_0} = \ket{0}$, $\ket{q_A} = \ket{6}$, and $\ket{q_B} = \ket{0}$.

	The obvious difference in the current game is that now player 1 has a double advantage:

	\begin{enumerate}
		\item	she makes both the first and the last move, and
		\item	she has a much larger group to choose from.
	\end{enumerate}

	The same quantum Fourier transform $F_{7}$ shown in equation (\ref{eq:The Quantum Fourier Transform $F_{7}$}) of the previous example, can now pave the way to much better results. 

	Specifically, the strategy $\sigma_{1} = ( F_{7}, I, \dots, I, T_{0, 6} F_{7}^{\dagger} )$ is a strong winning strategy for player 1. This strategy of player 1 drives the system to the state $\ket{\psi} = \frac{1}{\sqrt{7}} \sum_{i = 0}^{6} \ket{i}$, which is invariant under $S_{n}$. Hence, the system will remain at this state, irrespective of the second player's move.

	The critical realization here is that the system, in contrast to the previous example, is still at state $\ket{\psi}$ after player 2 has completed her moves.

	The last move is now by player 1, who can act using the operator $T_{0, 6} F_{7}^{\dagger}$ to drive the system to her target state $\ket{6}$. In this game player 1 will surely win. This winning scenario is captured by the automaton of Figure \ref{fig:Example 2 Automaton}.

	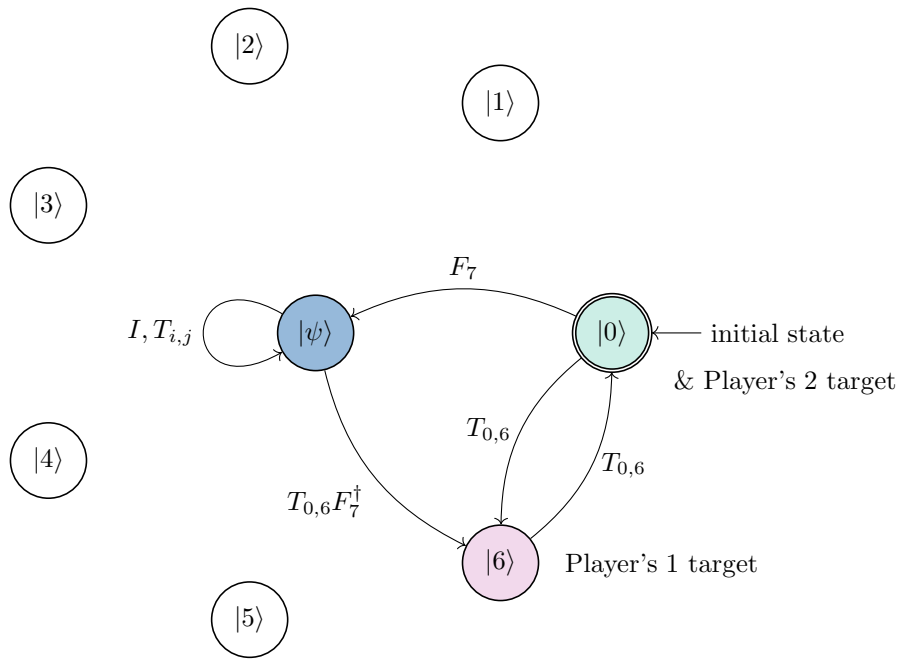
\begin{figure}[H]
		\centering
		\begin{tikzpicture}
			[
			scale = 1.3,
			state/.style =
			{
				circle, 
				draw, 
				minimum size = 1.0 cm,
				semithick,
			},
			every loop/.style = {min distance = 12 mm}
			]
			\def \angle {360/7}
			\node [ state, fill = GreenLighter2!20, initial, initial text = { initial state }, initial where = right, initial distance = 0.5 cm, accepting ] (q0)
			at ( {3 * cos(0 * \angle)}, {3 * sin(0 * \angle)} ) { $ \ket{0} $ };
			\node (t2) at ( 4.75, -0.5 ) { \& Player's 2 target };
			\node [ state ] (q1) at ( {3 * cos(1 * \angle)}, {3 * sin(1 * \angle)} ) { $ \ket{1} $ };
			\node [ state ] (q2) at ( {3 * cos(2 * \angle)}, {3 * sin(2 * \angle)} ) { $ \ket{2} $ };
			\node [ state ] (q3) at ( {3 * cos(3 * \angle)}, {3 * sin(3 * \angle)} ) { $ \ket{3} $ };
			\node [ state ] (q4) at ( {3 * cos(4 * \angle)}, {3 * sin(4 * \angle)} ) { $ \ket{4} $ };
			\node [ state ] (q5) at ( {3 * cos(5 * \angle)}, {3 * sin(5 * \angle)} ) { $ \ket{5} $ };
			\node [ state, fill = RedPurple!15 ] (q6) at ( {3 * cos(6 * \angle)}, {3 * sin(6 * \angle)} ) { $ \ket{6} $ };
			\node (t1) at ( 3.5, -2.375 ) { Player's 1 target };
			\node [ state, fill = WordBlueDark!50 ] (psi) at ( 0.0, 0.0 ) { $ \ket{\psi} $ };
			\path[->]
			(q0) edge [bend right = 25] node [ above ] { $F_{7}$ } (psi)
			(q0) edge [bend right = 25] node [ left ] { $T_{0, 6}$ } (q6)
			(q6) edge [bend right = 25] node [ right ] { $T_{0, 6}$ } (q0)
			(psi) edge [bend right = 25] node [ below left ] { $T_{0, 6} F_{7}^{\dagger}$ } (q6)
			(psi) edge [out = 150, in = -150, loop] node [left] { $I, T_{i, j}$ } ();
		\end{tikzpicture}
		\caption{This automaton displays the strong strong winning strategy of player 1 for the game studied in this example.}
		\label{fig:Example 2 Automaton}
	\end{figure}

\end{example}

The results and the examples above should help drive through the main point of this study, which is that the \emph{specific rules of a game are extremely important}! A slight variation can have profound consequences. It is the combination of two factors:

\begin{enumerate}
	\item	the precise sequence of moves as dictated by the rules of the game, and
	\item	the sets of admissible actions the players draw their moves from,
\end{enumerate}

that determines who wins. Quantum strategies do not a priori prevail over classical strategies. With some ingenuity one may design a game where the quantum player does not have a winning strategy against the classical player. See such an illuminating example and further comments in \cite{Anand2015}. The determining factor are the rules of the game. By carefully designing the rules of the game, the advantage of either player can be established. Alternatively, the fairness of the game can also be guaranteed. Another immediate conclusion that can be drawn is that although it is indeed possible to have quantum versions of classical games that exhibit different behavior or even entirely new traits, caution must exercised in designing the rules that will make this possible.

\section{Conclusions} \label{sec:Conclusions}

Quantum games can be used in a plethora of ``serious'' situations. For instance rolling quantum dice can be used to implement secure cryptographic protocols. In this paper we studied sequential quantum games under the assumption that the moves of the players are drawn from groups and not plain unstructured sets. The extra group structure makes possible to easily derive very general results as the those in Sections \ref{sec:Results for Canonical Games} and \ref{sec:Results for Asymmetric Games}, which, to the best of our knowledge, are stated in this generality for the first time. Although it seems a bit restrictive to focus on groups, it is consistent with the quantumness of these games. After all, quantum actions come from unitary groups.

Of course, the possibilities for quantum games are virtually endless. Many important and critical issues have to be addressed, such as to what extent the players communicate, what type of information is known to the players, if and how precisely entanglement comes into play. Indeed, the degree of knowledge of the players about the system is critical in the formation of strategies as we know from classical game theory. One of course will ask how can this be, without measuring the system? It is of course virtually impossible, but perhaps we can imagine a third party spying on the players and acquiring partial information about their strategies. This is, we believe, a very intriguing topic which we intend to pursuit in a future work.

\bibliographystyle{ieeetr}
\bibliography{ConditionsSurelyWinSequentialQuantumGames}

\begin{thebibliography}{10}

\bibitem{Dixit2015}
A.~Dixit, {\em Games of strategy}.
\newblock New York: W.W. Norton \& Company, 2015.

\bibitem{Myerson1997}
R.~Myerson, {\em Game Theory}.
\newblock Harvard University Press, 1997.

\bibitem{Maschler2020}
M.~Maschler, {\em Game Theory}.
\newblock Cambridge University Press, 2020.

\bibitem{Meyer1999}
D.~A. Meyer, ``Quantum strategies,'' {\em Physical Review Letters}, vol.~82,
  no.~5, p.~1052, 1999.

\bibitem{Wang2000}
X.-B. Wang, L.~Kwek, and C.~Oh, ``Quantum roulette: an extended quantum
  strategy,'' {\em Physics Letters A}, vol.~278, pp.~44--46, dec 2000.

\bibitem{Ren2007}
H.-F. Ren and Q.-L. Wang, ``Quantum game of two discriminable coins,'' {\em
  International Journal of Theoretical Physics}, vol.~47, no.~7,
  pp.~1828--1835, 2007.

\bibitem{Salimi2009}
S.~Salimi and M.~Soltanzadeh, ``Investigation of quantum roulette,'' {\em
  International Journal of Quantum Information}, vol.~7, no.~03, pp.~615--626,
  2009.

\bibitem{Andronikos2018}
T.~Andronikos, A.~Sirokofskich, K.~Kastampolidou, M.~Varvouzou, K.~Giannakis,
  and A.~Singh, ``Finite automata capturing winning sequences for all possible
  variants of the {PQ} penny flip game,'' {\em Mathematics}, vol.~6, p.~20, Feb
  2018.

\bibitem{Andronikos2021}
T.~Andronikos and A.~Sirokofskich, ``The connection between the {PQ} penny flip
  game and the dihedral groups,'' {\em Mathematics}, vol.~9, no.~10, p.~1115,
  2021.

\bibitem{Anand2015}
N.~Anand and C.~Benjamin, ``Do quantum strategies always win?,'' {\em Quantum
  Information Processing}, vol.~14, no.~11, pp.~4027--4038, 2015.

\bibitem{Zhang2017}
P.~Zhang, X.-Q. Zhou, Y.-L. Wang, B.-H. Liu, P.~Shadbolt, Y.-S. Zhang, H.~Gao,
  F.-L. Li, and J.~L. O'Brien, ``Quantum gambling based on nash-equilibrium,''
  {\em npj Quantum Information}, vol.~3, no.~1, 2017.

\bibitem{Bennett2014}
C.~H. Bennett and G.~Brassard, ``Quantum cryptography: Public key distribution
  and coin tossing,'' {\em Theoretical Computer Science}, vol.~560, pp.~7--11,
  2014.

\bibitem{Aharon2010}
N.~Aharon and J.~Silman, ``Quantum dice rolling: a multi-outcome generalization
  of quantum coin flipping,'' {\em New Journal of Physics}, vol.~12, no.~3,
  p.~033027, 2010.

\bibitem{Eisert1999}
J.~Eisert, M.~Wilkens, and M.~Lewenstein, ``Quantum games and quantum
  strategies,'' {\em Physical Review Letters}, vol.~83, no.~15, p.~3077, 1999.

\bibitem{Giannakis2019}
K.~Giannakis, G.~Theocharopoulou, C.~Papalitsas, S.~Fanarioti, and
  T.~Andronikos, ``Quantum conditional strategies and automata for prisoners'
  dilemmata under the {EWL} scheme,'' {\em Applied Sciences}, vol.~9, p.~2635,
  Jun 2019.

\bibitem{Rycerz2020}
K.~Rycerz and P.~Frackiewicz, ``A quantum approach to twice-repeated 2 x 2
  game,'' {\em Quantum Information Processing}, vol.~19, no.~8, 2020.

\bibitem{Neyman1985}
A.~Neyman, ``Bounded complexity justifies cooperation in the finitely repeated
  prisoners' dilemma,'' {\em Economics letters}, vol.~19, no.~3, pp.~227--229,
  1985.

\bibitem{Rubinstein1986}
A.~Rubinstein, ``Finite automata play the repeated prisoner's dilemma,'' {\em
  Journal of economic theory}, vol.~39, no.~1, pp.~83--96, 1986.

\bibitem{Abreu1988}
D.~Abreu and A.~Rubinstein, ``The structure of nash equilibrium in repeated
  games with finite automata,'' {\em Econometrica: Journal of the Econometric
  Society}, pp.~1259--1281, 1988.

\bibitem{Marks1990}
R.~E. Marks, {\em Repeated games and finite automata}.
\newblock Australian Graduate School of Management, University of New South
  Wales, 1990.

\bibitem{Meyer2002}
D.~A. Meyer and H.~Blumer, ``Parrondo games as lattice gas automata,'' {\em
  Journal of statistical physics}, vol.~107, no.~1-2, pp.~225--239, 2002.

\bibitem{Bertelle2002}
C.~Bertelle, M.~Flouret, V.~Jay, D.~Olivier, and J.-L. Ponty, ``Adaptive
  behaviour for prisoner dilemma strategies based on automata with
  multiplicities,'' in {\em ESS 2002 Conf., Dresden}, Citeseer, 2002.

\bibitem{Suwais2014}
K.~Suwais, ``Assessing the utilization of automata in representing players'
  behaviors in game theory,'' {\em International Journal of Ambient Computing
  and Intelligence (IJACI)}, vol.~6, no.~2, pp.~1--14, 2014.

\bibitem{Giannakis2015a}
K.~Giannakis, C.~Papalitsas, K.~Kastampolidou, A.~Singh, and T.~Andronikos,
  ``Dominant strategies of quantum games on quantum periodic automata,'' {\em
  Computation}, vol.~3, pp.~586--599, nov 2015.

\bibitem{Kastampolidou2020}
K.~Kastampolidou, M.~N. Nikiforos, and T.~Andronikos, ``A brief survey of the
  prisoners' dilemma game and its potential use in biology,'' in {\em Advances
  in Experimental Medicine and Biology}, pp.~315--322, Springer International
  Publishing, 2020.

\bibitem{Theocharopoulou2019}
G.~Theocharopoulou, K.~Giannakis, C.~Papalitsas, S.~Fanarioti, and
  T.~Andronikos, ``Elements of game theory in a bio-inspired model of
  computation,'' in {\em 2019 10th International Conference on Information,
  Intelligence, Systems and Applications ({IISA})}, {IEEE}, jul 2019.

\bibitem{Kastampolidou2020a}
K.~Kastampolidou and T.~Andronikos, ``A survey of evolutionary games in
  biology,'' in {\em Advances in Experimental Medicine and Biology},
  pp.~253--261, Springer International Publishing, 2020.

\bibitem{Artin2011}
M.~Artin, {\em Algebra}.
\newblock Pearson Prentice Hall, 2011.

\bibitem{Dummit2004}
D.~Dummit and R.~Foote, {\em Abstract Algebra}.
\newblock Wiley, 2004.

\bibitem{Stillwell2008}
J.~Stillwell, {\em Naive Lie Theory}.
\newblock Undergraduate Texts in Mathematics, Springer, 2008.

\bibitem{Hall2013}
B.~Hall, {\em Quantum Theory for Mathematicians}.
\newblock Graduate Texts in Mathematics, Springer New York, 2013.

\bibitem{Meyer2000}
C.~D. Meyer, {\em Matrix Analysis and Applied Linear Algebra}.
\newblock Philadelphia, PA, USA: Society for Industrial and Applied
  Mathematics, 2000.

\bibitem{Nielsen2010}
M.~A. Nielsen and I.~L. Chuang, {\em Quantum computation and quantum
  information}.
\newblock Cambridge University Press, 2010.

\end{thebibliography}

\end{document}